
\documentclass[10pt]{article}
\usepackage{amsfonts}
\usepackage{amssymb,amsmath,amsthm,latexsym}
\textheight23cm \textwidth16cm \hoffset-2cm \voffset-1.3cm
\parskip 2pt plus1pt minus1pt

\usepackage{amsmath,amsthm,amssymb}
\newtheorem{Theorem}{Theorem}[section]
\newtheorem{lem}[Theorem]{Lemma}
\newtheorem{Remark}[Theorem]{Remark}

\newtheorem{Corollary}[Theorem]{Corollary}

\newtheorem{Example}[Theorem]{Example}
\numberwithin{equation}{section}

\begin{document}

\title{Construction of  Self-dual Codes over $F_p+vF_p$
\footnote{E-mail addresses: zgh09@yahoo.com.cn (G. Zhang),
b\_c\_chen@yahoo.com.cn (B. Chen)}}

\author{Guanghui Zhang$^1$, Bocong Chen$^2$}

\date{\small 1. School of Mathematical Sciences,
Luoyang Normal University,\\
Luoyang, Henan, 471022, China \\
2. School of Mathematics and Statistics,
Central China Normal University,\\
Wuhan, Hubei, 430079, China
}

\maketitle

\begin{abstract}
In this paper, we
determine all  self-dual codes over $F_p+vF_p$ ($v^2=v$)
in terms of self-dual codes over the finite field $F_p$
and
give an explicit construction for self-dual codes over $F_p+vF_p$,
where $p$ is  a prime.
\medskip

\textbf{Keywords:} Self-dual code;  Permutation-equivalent; Generator matrix; Parity check matrix;  Code over $F_p+vF_p$.

\medskip
\textbf{2010 Mathematics Subject Classification:}~ 94B05; 94B15
\end{abstract}
\section{Introduction}
Codes over finite rings were initiated in the early 1970s \cite{Blake1972, Blake1975}.
They have received much attention  since the seminal work \cite{HK},
which showed that certain good nonlinear
binary codes could be found as images of linear codes over $\mathbb{Z}_4$ under the Gray map.

Generally, most of the studies are concentrated on the situation when
the ground rings associated with codes are finite chain rings
( e.g. see \cite{DINH},\cite{Dinh10},\cite{GH},\cite{KL}-\cite{LB},\cite{W1} ).
However, it has been proved that finite Frobenius rings are suitable for coding alphabets \cite{WJ},
which leads to many works on codes over non-chain rings.

In recent years, linear codes over the ring $F_p+vF_p$ with $v^2=v$ and $p$ being a prime,
which is not a chain ring but a  Frobenius  ring,  have been considered by some authors.
In \cite{ZWS}, Zhu et al. gave some results on cyclic codes over $F_2+vF_2$,
where it is shown that cyclic codes over the ring are principally generated.
In \cite{ZW},  Zhu et al. studied $(1-2v)$-constacyclic codes over $F_p+vF_p$, where $p$ is an odd prime.
They determined the image of  $(1-2v)$-constacyclic codes over $F_p+vF_p$ under the Gray map
and the structures of such constacyclic codes over $F_p+vF_p$.
In \cite{CDD}, Cengellenmis et al. generated the ring $F_2+vF_2$ to the infinite family of rings
$A_k=F_2[v_1, v_2, \cdots,v_k]/\langle v_i^2=v_i, v_iv_j=v_jv_i\rangle, 1\leq i,j \leq k$,
and studied codes over these rings by using  Gray maps.

On the other hand, self-dual codes play a very significant role in coding theory
both from  practical and  theoretical points of
view. A vast number of papers have been devoted to the study of self-dual codes;
e.g. see \cite{DHS}-\cite{DKKL}, \cite{GL}, \cite{HP}-\cite{KL1}, \cite{MS,PH}. 
In \cite{KL}, Kim and Lee gave an efficient method to construct self-dual codes over finite
fields from a given self-dual code of a smaller length.
In \cite{DKKL}, Dougherty et al.
proved that self-dual codes exist over all finite commutative Frobenius rings and gave some building-up
constructions for self-dual codes over these rings.
More recently, Cengellenmis et al. \cite{CDD} studied Euclidean and Hermitian self-dual codes over
$A_k=F_2[v_1, v_2, \cdots,v_k]/\langle v_i^2=v_i, v_iv_j=v_jv_i\rangle, 1\leq i,j \leq k$,
and gave a sufficient and necessary condition for the existence of self-dual codes over the rings.

In this paper, we
determine all  self-dual codes over $F_p+vF_p$ ($v^2=v$)
in terms of self-dual codes over the finite field $F_p$
and give an explicit construction for self-dual codes over $F_p+vF_p$,
where $p$ is  a prime.
Unlike the technique used in the mentioned papers, we first give the parity check matrices for
linear codes over $F_p+vF_p$. Then we characterize the torsion codes associated with
the linear codes, which
are used as a tool to study self-dual codes over $F_p+vF_p$ and their explicit construction.

The organization of this paper is as follows.
The necessary notations and  some known results
are provided in Section 2.
In Section 3,  we first characterize the torsion codes, and then give some criteria for a linear code over the ring to be self-dual.
In Section 4, we determine all  self-dual codes over $F_p+vF_p$
and give an explicit construction for self-dual codes over $F_p+vF_p$.
In section 5, we give some examples to illustrate our main results.

\section{Preliminaries}
Let $F_p$ be a finite field with $p$ elements, where $p$ is a prime.
Throughout this paper,  $R$ denotes the commutative ring $F_p+vF_p=\{a+vb\,|\,a,b\in F_p\}$ with $v^2=v$.
Any element of $R$ can be uniquely expressed as $c=a+vb$, where $a,b\in F_p$.
The Gray map $\Phi$ from $R$ to $F_p\times F_p$ is given by $\Phi(c)=(a,a+b)$. It is routine to check that $\Phi$ is a ring isomorphism,
which means  $R$ is isomorphic to the ring $F_p\times F_p;$ so $R$ is a finite Frobenius ring.
The ring $R$ is a semi-local ring with exactly two maximal ideals given by $\langle v\rangle=\{av|a\in F_p\}$
and $\langle 1-v\rangle=\{a(1-v)|a\in F_p\}$.
It is easy to verify that both $R/\langle v\rangle$ and $R/\langle 1-v\rangle$ are isomorphic to $F_p$.

A code $C$ of length $n$ over $R$ is a nonempty subset of $R^n$, and the ring $R$ is
referred to the alphabet of the code. If this subset is also an $R$-submodule of $R^n$,
then $C$ is called linear. For any  code $C$ of length $n$ over $R$, the {\it dual code of $C$}
is defined as $C^\perp=\{u\in R^n\,|\,u\cdot v=0, \mbox{for any $v\in C$}\}$,
where $u\cdot v$ denotes the standard Euclidean inner product of $u$ and $v$ in $R^n$.
Notice that $C^\perp$ is linear whether or not $C$ is linear.
If $C\subseteq C^\perp$, then $C$ is called {\it self-orthogonal}. If $C = C^\perp$,
then $C$ is called {\it self-dual}.

We have known that the ring $R$ has exactly two maximal ideals $\langle v\rangle$ and $\langle 1-v\rangle$.
Their residue fields are both $F_p$. Thus we have two canonical projections defined as follows:
$$R=F_p+vF_p \longrightarrow R/\langle v\rangle = F_p$$
$$a+vb\longmapsto a;$$
and
$$R=F_p+vF_p \longrightarrow R/\langle 1-v\rangle = F_p$$
$$a+vb\longmapsto a+b.$$
We simple denote these two projections by ``\,\,$-$\,\,'' and``\,\,\,$\widehat{}$\,\,\,'', respectively.
Denote by $\overline{r}$ and $\widehat{r}$ the images of an element $r\in R$ under these two projections, respectively.

Note that any element $c$ of $R^n$ can be uniquely expressed as $c=r+vq$, where $r,q \in F_p^n$.
Let $C$ be a linear code of length $n$ over $R$. Define
$$C_1=\{a\in F_p^n|a+vb\in C, \text{for some} \, \,b\in F_p^n\}$$
and
$$
C_2=\{a+b\in F_p^n|a+vb\in C\}.
$$
Obviously, $C_1$ and $C_2$ are linear codes over $F_p$.

Assume that $a\in R$.
For a code $C$ of length $n$ over $R$,
the {\it submodule quotient} is a linear code of length $n$ over $R$, defined as follows:
$$(C:a)=\{x\in R^n|ax\in C\}.$$
The codes $\widehat{(C:v)}$ and $\overline{(C:(1-v))}$ over the field $F_p$ is called
the {\it torsion codes} associated with the code $C$ over the ring $R$.

For the case of odd prime $p$, any nonzero linear code $C$ over $R$ is permutation-equivalent to a
code generated by the following matrix ( see \cite{ZW}):
$$G=
\begin{pmatrix}
I_{k_1} & (1-v)B_1 & vA_1 & vA_2+(1-v)B_2 & vA_3+(1-v)B_3 \\
0 & vI_{k_2} & 0 & vA_4 & 0 \\
0 & 0 & (1-v)I_{k_3} & 0 & (1-v)B_4
\end{pmatrix},
$$
where $A_i$ and $B_j$ are matrices with entries in $F_p$ for $i,j=1,2,3,4$.
Such a code $C$ is said to have type $p^{2k_1}p^{k_2}p^{k_3}$ and $|C|=p^{2k_1+k_2+k_3}$.
For later convenience the above generator matrix can be rewritten in the form:
$$G=
\begin{pmatrix}
I_{k_1} & (1-v)B_1 & vA_1 & vD_1+(1-v)D_2  \\
0 & vI_{k_2} & 0 & vC_1 \\
0 & 0 & (1-v)I_{k_3} & (1-v)C_2
\end{pmatrix},\qquad \qquad \qquad(\ast)
$$
where $D_1=(A_2 \mid A_3), D_2=(B_2 \mid B_3), C_1=(A_4 \mid 0), C_2=(0 \mid B_4)$.

For the case  $p=2$, a nonzero linear code $C$ over $R$ has
a generator matrix which after a suitable permutation of the coordinates
can be written in the form ( see \cite{WZX,ZWS} ):
$$
G=
\begin{pmatrix}
I_{k_1} & A & B & D_1+vD_2\\
0 & vI_{k_2} & 0 & vC_1\\
0 & 0 & (1+v)I_{k_3} & (1+v)E
\end{pmatrix},\qquad \qquad \qquad(\ast)
$$
where $A, B, C_1, D_1, D_2$ and $E$ are  matrices  with entries in $F_2$,
and $|C|=2^{2k_1}2^{k_2}2^{k_3}$.

For $k>0$, $I_k$ denotes the $k\times k$ identity matrix.
The code $C_1$ is permutation-equivalent to a code with generator matrix of the form ( see \cite{ZW,ZWS} ):
\begin{equation*}
G_1=
\begin{cases}
\begin{pmatrix}
I_{k_1} & B_1 & 0 & B_2 & B_3 \\
0 & 0 & I_{k_3} & 0 & B_4
\end{pmatrix}, & \text{$p$ is odd;}\\
\begin{pmatrix}
I_{k_1} & A & B & D_1\\
0 & 0 & I_{k_3} & E
\end{pmatrix}, & \text{$p=2$,}
\end{cases}
\end{equation*}
where $A, B, E, D_1$ and $B_i$ are $p$-ary matrices for $i\in \{1,2,3,4\}$.
And the code $C_2$ is permutation-equivalent to a code with generator matrix of the form ( see \cite{ZW,ZWS} ):
\begin{equation*}
G_2=
\begin{cases}
\begin{pmatrix}
I_{k_1} & 0 & A_1 & A_2 & A_3 \\
0 & I_{k_2} & 0 & A_4 & 0
\end{pmatrix}, & \text{$p$ is odd;}\\
\begin{pmatrix}
I_{k_1} & A & B & D_1+D_2\\
0 & I_{k_2} & 0 & C_1
\end{pmatrix}, & \text{$p=2$,}
\end{cases}
\end{equation*}
where $A,B, C_1, D_1, D_2$ and $A_i$ are $p$-ary matrices for $i\in \{1,2,3,4\}$.
It is easy to see that ${\rm dim} C_1=k_1+k_3$ and ${\rm dim} C_2=k_1+k_2$.
\section{Self-dual codes over $F_p+vF_p$}
We begin with a lemma about the torsion codes associated with the code over the ring $R$,
which will be used throughout the paper.
\begin{lem}\label{mainlemma}
Assume the notation given above. Let $C$ be a linear code of length $n$ over $R$. Then

{\rm (1)} $\widehat{(C:v)}=C_2$.

{\rm (2)} $\overline{(C:(1-v))}=C_1$.
\end{lem}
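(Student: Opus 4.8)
The plan is to reduce everything to two elementary identities describing how the idempotents $v$ and $1-v$ act on $R^n$, and then to identify the projected submodule quotients with $C_2$ and $C_1$ directly. First I would record that the two projections coincide with multiplication by the idempotents: for any $c = a + vb \in R^n$ with $a,b \in F_p^n$, one computes $vc = va + v^2 b = v(a+b)$ and $(1-v)c = a + vb - va - vb = (1-v)a$. In the notation of the projections this reads $vc = v\,\widehat{c}$ and $(1-v)c = (1-v)\,\overline{c}$, where I also use $\widehat{v} = 1$, $\overline{1-v} = 1$, while $\overline{v} = 0$ and $\widehat{1-v} = 0$. I would observe at the outset that $C_1 = \overline{C}$ and $C_2 = \widehat{C}$; both are immediate from the definitions together with the uniqueness of the expression $c = a + vb$, since $\overline{c} = a$ and $\widehat{c} = a+b$.

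For part (1), I would first show that $\widehat{(C:v)} = \{w \in F_p^n \mid vw \in C\}$. The point is that for $x = s + vt \in R^n$ the product $vx = v(s+t)$ depends only on $\widehat{x} = s+t$; hence as $x$ ranges over $(C:v)$ the freedom in the ``$vt$-part'' is irrelevant, and $\widehat{x}$ ranges exactly over those $w$ with $vw \in C$. Then I would prove $\{w \mid vw \in C\} = \widehat{C} = C_2$ by two inclusions: if $vw \in C$ then $w = \widehat{vw} \in \widehat{C}$ because $\widehat{v} = 1$; conversely, if $w = \widehat{c}$ with $c \in C$, then $vw = v\,\widehat{c} = vc \in C$ by linearity of $C$.

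Part (2) is entirely parallel, using the companion identity $(1-v)x = (1-v)\,\overline{x}$: I would show $\overline{(C:(1-v))} = \{s \in F_p^n \mid (1-v)s \in C\}$ and then that this set equals $\overline{C} = C_1$, invoking $\overline{1-v} = 1$ in one inclusion and, in the other, the linearity of $C$ together with $(1-v)s = (1-v)\,\overline{c}$ for $s = \overline{c}$, $c \in C$.

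I do not expect a genuine obstacle here; the only point requiring care is the bookkeeping in the first step of each part, namely verifying that applying the projection to the submodule quotient neither loses nor gains vectors. This hinges precisely on the fact that $vx$ (respectively $(1-v)x$) factors through $\widehat{x}$ (respectively $\overline{x}$), so that the implicit ``for some $b$'' quantifier in the definition of the torsion codes is harmless. Once the two idempotent identities are in hand, both equalities follow from the short inclusion arguments above.
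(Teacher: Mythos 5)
Your proposal is correct and follows essentially the same route as the paper: the paper's proof is also a double-inclusion element argument whose only algebraic input is the computation $v(r+vq)=v(r+q)$ and $(1-v)(r+vq)=(1-v)r$, i.e.\ exactly your identities $vx=v\,\widehat{x}$ and $(1-v)x=(1-v)\,\overline{x}$. Your version merely packages the same computation more cleanly, by isolating those identities up front and noting $C_1=\overline{C}$, $C_2=\widehat{C}$, whereas the paper carries out the substitutions inline for each inclusion.
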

\begin{proof}
(1) For any $y\in \widehat{(C:v)}$, there exists an $x\in (C:v)$ such that $y=\widehat{x}$.
Let $x=r+vq$, where $r,q\in F_p^n$. Then $\widehat{x}=r+q$. Since $vx\in C$, we have
$$v(r+q)=v(r+vq)=vx\in C,$$
which implies that $r+q\in C_2$. Therefore $y=\widehat{x}=r+q\in C_2$.
It follows that $\widehat{(C:v)}\subseteq C_2$.

Let $z\in C_2$. Then there exists an element $x+vy\in C$ such that $z=x+y$.
Hence
$$v(x+y)=v(x+vy)\in C,$$
and $x+y\in (C:v)$. Thus we have that
$$z=x+y=\widehat{x+y}\in \widehat{(C:v)}.$$
Hence $\widehat{(C:v)}\supseteq C_2$. Therefore we get the desired result.

(2) Let $y$ be an element of $\overline{(C:(1-v))}$, then there exists some $x \in (C:(1-v))$
such that $y=\overline{x}$. Suppose that $x=r+vq$, for $r, q\in F_p^n$. Then $\overline{x}=r$.
From $(1-v)x \in C$ we have that
$$r-vr=(1-v)r=(1-v)(r+vq)=(1-v)x\in C,$$
which leads to $r\in C_1$. Hence $y=\overline{x}=r\in C_1$. Therefore we obtain that $\overline{(C:v)}\subseteq C_1$.

If $r$ is an element of $C_1$, then we have that $r+vq \in C$ for some $q\in F_q^n$. Since
$$(1-v)r=(1-v)(r+vq)\in C,$$
which shows that $r\in (C:(1-v))$. Hence $r=\overline{r}\in \overline{(C:(1-v))}$,
then $\overline{(C:(1-v))}\supseteq C_1$. Therefore $\overline{(C:(1-v))}= C_1$, as required.
\end{proof}

In the following $A^T$ denotes the transpose of the matrix $A$.
Suppose $\mathcal{C}_1$ and $\mathcal{C}_2$ are permutation equivalent linear codes over $R$ with
$\mathcal{C}_1P=\mathcal{C}_2$ for some permutation matrix $P$. Then
$\mathcal{C}_1^\perp P=\mathcal{C}_2^\perp$. Without loss of generality,
we may assume that
a linear code $C$ of length $n$ over $R$ is with generator matrix in the form $(\ast)$.
\begin{Theorem}\label{generatormatrix}
Let $C$ be a linear code of length $n$ over $R$ with generator matrix in the form $(\ast)$.

{\rm (1)} For $p$ being odd, let
$$
H=
\begin{pmatrix}
v E_1+(1-v)E_2 & P & Q & I_{n-k}\\
v(-A_1^T) & 0 & v I_{k_3} & 0\\
(1-v)(-B_1^T) & (1-v)I_{k_2} & 0 & 0
\end{pmatrix},
$$
where $E_1=(-A_2 \mid A_1B_4-A_3)^T$, $E_2=(B_1A_4-B_2 \mid -B_3)^T$, $P=(-A_4 \mid 0)^T$,
$Q=(0 \mid -B_4)^T$ and   $k=k_1+k_2+k_3$. Then $H$ is a generator matrix for $C^\perp$ and a parity check matrix for $C$.

{\rm (2)} For $p=2$, let
$$
H=
\begin{pmatrix}
E^TB^T+C_1^TA^T+(D_1+vD_2)^T & C_1^T & E^T & I_{n-k}\\
vB^T & 0 & vI_{k_3} & 0 \\
(1+v)A^T & (1+v)I_{k_2} & 0 & 0
\end{pmatrix},
$$
where $A, B, C_1, D_1, D_2$ and $E$ are  matrices  with entries in $F_2$  and $k=k_1+k_2+k_3$.
Then $H$ is a generator matrix for $C^\perp$ and a parity check matrix for $C$.

{\rm (3)} $(\widehat{(C:v)})^\perp=\widehat{(C^\perp:v)}; (\overline{(C:(1-v))})^\perp=\overline{(C^\perp:(1-v))}$.
\end{Theorem}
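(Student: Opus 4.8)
The plan is to establish the parity-check claims (1) and (2) by the standard two-step argument, orthogonality followed by a cardinality count, and then to deduce (3) from Lemma \ref{mainlemma} together with the Chinese-remainder splitting of $R$. For (1) and (2) the first step is to verify the orthogonality relation $GH^T=0$ by a direct block computation. The only algebra involved is the idempotent arithmetic $v^2=v$, $(1-v)^2=1-v$ and $v(1-v)=0$ (for $p=2$ the element $1-v$ is replaced throughout by $1+v$, with $(1+v)^2=1+v$ and $v(1+v)=0$, and all minus signs collapse since $-1=1$). Multiplying the three row blocks of $G$ against the three column blocks of $H^T$, and substituting the definitions $D_1=(A_2\mid A_3)$, $D_2=(B_2\mid B_3)$, $C_1=(A_4\mid 0)$, $C_2=(0\mid B_4)$ and $E_1=(-A_2\mid A_1B_4-A_3)^T$, $E_2=(B_1A_4-B_2\mid -B_3)^T$, every block should cancel to $0$. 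This shows that each row of $H$ lies in $C^\perp$, i.e. $\langle H\rangle\subseteq C^\perp$. I expect this to be the real obstacle: keeping the many submatrices $A_i,B_j,E_1,E_2,P,Q$ straight and checking that the mixed terms vanish by $v(1-v)=0$ while the remaining terms cancel against the precise definitions of $E_1,E_2$ is where all the bookkeeping lives.

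The second step is to count $|\langle H\rangle|$ from the block shape of $H$. The pivot blocks $I_{n-k}$, $vI_{k_3}$ and $(1-v)I_{k_2}$ occupy three distinct column positions, so the rows of $H$ are independent; the $n-k$ rows carrying $I_{n-k}$ are free over $R$ (each contributing a factor $|R|=p^2$), while the $k_3$ rows carrying $vI_{k_3}$ and the $k_2$ rows carrying $(1-v)I_{k_2}$ generate copies of $vR\cong F_p$ and $(1-v)R\cong F_p$ (each contributing a factor $p$). Hence $|\langle H\rangle|=p^{2(n-k)+k_3+k_2}=p^{2n-2k_1-k_2-k_3}$. Since $R$ is a finite Frobenius ring, every linear code satisfies $|C|\,|C^\perp|=|R|^n=p^{2n}$, and from $|C|=p^{2k_1+k_2+k_3}$ one gets $|C^\perp|=p^{2n-2k_1-k_2-k_3}=|\langle H\rangle|$. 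Together with $\langle H\rangle\subseteq C^\perp$ this forces $\langle H\rangle=C^\perp$, so $H$ generates $C^\perp$. Finally the Frobenius property also gives $(C^\perp)^\perp=C$, whence $C=\{x\in R^n\mid Hx^T=0\}$ and $H$ is a parity-check matrix for $C$. The case $p=2$ is identical with $1-v$ replaced by $1+v$.

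For (3) I would first reduce to a statement about $C_1$ and $C_2$. Applying Lemma \ref{mainlemma} to both $C$ and $C^\perp$ gives $\widehat{(C:v)}=C_2$, $\widehat{(C^\perp:v)}=(C^\perp)_2$, $\overline{(C:(1-v))}=C_1$ and $\overline{(C^\perp:(1-v))}=(C^\perp)_1$, so the two assertions become $(C^\perp)_2=(C_2)^\perp$ and $(C^\perp)_1=(C_1)^\perp$. To prove these I would use the ring isomorphism $R\cong F_p\times F_p$, $c\mapsto(\overline c,\widehat c)$, and the induced isomorphism $R^n\cong F_p^n\times F_p^n$. Because the two projections are ring homomorphisms, the Euclidean form factors as $\overline{u\cdot w}=\overline u\cdot\overline w$ and $\widehat{u\cdot w}=\widehat u\cdot\widehat w$, so $u\cdot w=0$ in $R^n$ iff $\overline u\cdot\overline w=0$ and $\widehat u\cdot\widehat w=0$ in $F_p^n$. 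Using the splitting $C=(1-v)C\oplus vC$, one checks that $C$ corresponds under this isomorphism exactly to the product code $C_1\times C_2$, since $(1-v)c\mapsto(\overline c,0)$ and $vc\mapsto(0,\widehat c)$. The product form of the inner product then identifies $C^\perp$ with $(C_1\times C_2)^\perp=(C_1)^\perp\times(C_2)^\perp$, and reading off the two coordinate projections yields $(C^\perp)_1=(C_1)^\perp$ and $(C^\perp)_2=(C_2)^\perp$, which is (3). As an alternative one could instead read generator matrices for $(C^\perp)_1,(C^\perp)_2$ off the matrix $H$ via the $G_1,G_2$ recipe and compare them with $(C_1)^\perp,(C_2)^\perp$ computed from $G_1,G_2$, but the splitting argument avoids that extra calculation and makes (3) comparatively routine once (1) and (2) are in hand.
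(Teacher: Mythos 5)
Your treatment of (1) and (2) is the same as the paper's: check orthogonality of the rows of $H$ against $G$ by block multiplication using $v^2=v$, $v(1-v)=0$ (the paper relegates exactly this computation to its appendix, calling it routine; your $GH^T=0$ is just the transpose of the paper's $HG^T=0$), then count $|\langle H\rangle|=p^{2(n-k)+k_2+k_3}$ from the pivot blocks and invoke $|C|\,|C^\perp|=|R|^n$, valid because $R$ is Frobenius, to force $\langle H\rangle=C^\perp$; your added observation that $(C^\perp)^\perp=C$ settles the parity-check claim is a reasonable completion of a point the paper leaves implicit. For (3), however, you take a genuinely different route. The paper proves the single inclusion $\widehat{(C^\perp:v)}\subseteq(\widehat{(C:v)})^\perp$ directly (from $v(xy^T)=0$ infer $\widehat{x}\,\widehat{y}^T=0$) and then upgrades it to equality by a dimension count that reads $\dim\widehat{(C^\perp:v)}=(n-k)+k_3$ off the matrix $H$, so its proof of (3) depends on parts (1)(2) and on the odd-$p$ versus $p=2$ case split. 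You instead use the CRT isomorphism $R^n\cong F_p^n\times F_p^n$: since $C$ is an $R$-submodule, $vc$ and $(1-v)c$ lie in $C$ for every $c\in C$, so the image of $C$ is the \emph{full} product $C_1\times C_2$ rather than a proper subdirect product (this is the one point requiring care, and your appeal to the splitting $vC\oplus(1-v)C$ handles it); the inner product factors componentwise, so the image of $C^\perp$ is $C_1^\perp\times C_2^\perp$, while applying the same product fact to the linear code $C^\perp$ identifies that image as $(C^\perp)_1\times(C^\perp)_2$, and Lemma \ref{mainlemma} translates the resulting equalities into (3). Your version buys uniformity in $p$, independence from the generator-matrix machinery of (1)(2), and in effect already contains the decomposition $C=vC_2\oplus(1-v)C_1$ that the paper only establishes later in Theorem \ref{maindecomposition}; the paper's version buys economy, recycling the matrix $H$ it has just built. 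The only piece of your proposal left unexecuted is the block verification of $HG^T=0$ itself, but you set it up correctly and it is, as the paper itself says, routine bookkeeping.
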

\begin{proof}
(1) Since the verification of  $HG^T=0$ is routine and somewhat tedious, we present a detail proof in the
appendix.
Let $D$ be the $R$-submodule generated by $H$, then $D\subseteq C^\perp$.
Since $R$ is a Frobenius ring, we have  $|C||C^\perp|=|R|^n$ (\cite{WJ}). It follows that
$$|C^\perp|=\frac{|R|^n}{|C|}=\frac{p^{2n}}{p^{2k_1+k_2+k_3}}=p^{2(n-k_1)-k_2-k_3}.$$
Note that $|D|=p^{2(n-k)+k_3+k_2}=p^{2(n-k_1)-k_2-k_3}$, and we obtain  $|D|=|C^\perp|$,
hence $D=C^\perp$.

(2) Similar to the proof of (1).

(3) We first prove that $\widehat{(C^\perp:v)}\subseteq (\widehat{(C:v)})^\perp$.
Let $x\in (C^\perp:v)$ and $y\in (C:v)$. Then $vx\in C^\perp$ and $vy\in C$,
so $(vx)(vy)^T=0$, i.e., $v(xy^T)=0$. Hence $xy^T\in (1-v)R$, and $\widehat{x}\widehat{y}^T=0$,
which implies that $\widehat{(C^\perp:v)}\subseteq (\widehat{(C:v)})^\perp$.
On the other hand, by Lemma \ref{mainlemma} and Theorem \ref{generatormatrix}\,(1)(2), we have that
$${\rm dim}\widehat{(C^\perp:v)}=n-k+k_3=n-k_1-k_2;$$
$${\rm dim}\widehat{(C:v)}^\perp=n-{\rm dim}\widehat{(C:v)}=n-(k_1+k_2)=n-k_1-k_2.$$
Hence ${\rm dim}\widehat{(C^\perp:v)}={\rm dim}\widehat{(C:v)}^\perp$,
which follows that $(\widehat{(C:v)})^\perp=\widehat{(C^\perp:v)}$.

The proof of the second equality is similar to that of the first one and is omitted here.
\end{proof}
\begin{Corollary}\label{conditions-1}
Let $C$ be a linear code of length $n$ over $R$. Then
$C$ is self-dual if and only if both the following two conditions are satisfied:

{\rm (i)} $C$ is self-orthogonal;

{\rm (ii)} $n=2(k_1+k_2), k_2=k_3$.
\end{Corollary}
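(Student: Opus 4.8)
The plan is to use the standard fact that a linear code $C$ over the Frobenius ring $R$ is self-dual precisely when it is both self-orthogonal and of cardinality $|C|=|C^\perp|$ (equivalently $|C|^2=|R|^n=p^{2n}$). Under this splitting, the sufficiency direction becomes a pure counting argument, while the necessity direction is where one must extract the two numerical constraints in (ii) from self-duality, and this is done through the torsion codes.

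For necessity, suppose $C$ is self-dual, so $C=C^\perp$; condition (i) is then immediate. To obtain (ii) I would combine Lemma \ref{mainlemma} with Theorem \ref{generatormatrix}(3). Since $C_2=\widehat{(C:v)}$ and $(\widehat{(C:v)})^\perp=\widehat{(C^\perp:v)}=\widehat{(C:v)}=C_2$, the code $C_2$ is self-dual over $F_p$; applying the same reasoning to $C_1=\overline{(C:(1-v))}$ shows $C_1$ is self-dual over $F_p$ as well. A self-dual code over a field has dimension equal to half its length, so $\dim C_1=\dim C_2=n/2$. Inserting the dimension identities $\dim C_1=k_1+k_3$ and $\dim C_2=k_1+k_2$ recorded at the end of Section 2 yields $k_1+k_3=k_1+k_2=n/2$, whence $k_2=k_3$ and $n=2(k_1+k_2)$, which is exactly condition (ii).

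For sufficiency, assume (i) and (ii). Then $C\subseteq C^\perp$, and it remains only to match cardinalities. Substituting $k_2=k_3$ into $|C|=p^{2k_1+k_2+k_3}$ and then using $n=2(k_1+k_2)$ gives $|C|=p^{2k_1+2k_2}=p^{n}$. Because $R$ is Frobenius, $|C|\,|C^\perp|=|R|^n=p^{2n}$, so $|C^\perp|=p^{n}=|C|$. Combined with the inclusion $C\subseteq C^\perp$, this forces $C=C^\perp$, so $C$ is self-dual.

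Once the torsion-code machinery from Lemma \ref{mainlemma} and Theorem \ref{generatormatrix} is available, the whole argument is short, and the only genuine content lies in the necessity direction. The main point to get right there is the transfer of self-duality from $C$ to its torsion codes $C_1$ and $C_2$ over $F_p$ via part (3) of Theorem \ref{generatormatrix}, after which the conclusion reduces to the two linear dimension identities. I do not expect any serious obstacle beyond careful bookkeeping with the ranks $k_1,k_2,k_3$.
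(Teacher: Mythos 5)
Your proof is correct, and it rests on the same machinery as the paper's (Lemma \ref{mainlemma}, Theorem \ref{generatormatrix}, and a cardinality count), but the necessity direction is organized differently. The paper never transfers self-duality down to the field codes: it simply notes that $C=C^\perp$ forces the torsion codes of $C$ and of $C^\perp$ to coincide, and then equates the two dimension formulas coming from Theorem \ref{generatormatrix}(1)(2), namely $k_1+k_2=\dim\widehat{(C:v)}=\dim\widehat{(C^\perp:v)}=n-k_1-k_2$ and likewise $k_1+k_3=n-k_1-k_3$, which yields (ii) directly. You instead invoke Theorem \ref{generatormatrix}(3) to conclude that $C_1$ and $C_2$ are self-dual over $F_p$ and then use the field fact that a self-dual code has dimension $n/2$; this is exactly the forward implication of the paper's later Corollary \ref{maincor}, used ahead of time (legitimately, since it depends only on results already proved). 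Your route is slightly less self-contained in that it imports the half-length dimension fact for self-dual codes over fields, but it buys a cleaner conceptual statement --- self-duality of $C$ descends to the torsion codes --- and in effect absorbs part of Corollary \ref{maincor} into this proof. In the sufficiency direction the two arguments are essentially identical; you compute $|C^\perp|$ from the Frobenius identity $|C|\,|C^\perp|=|R|^n$, while the paper reads it off the generator matrix of $C^\perp$, whose size was itself obtained from that same identity.
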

\begin{proof}
Now suppose that both Conditions (i) and (ii) are satisfied. Then we have that
$$|C|=p^{2k_1+k_2+k_3}=p^{2(k_1+k_2)}, |C^\perp|=p^{2(n-k)+k_2+k_3}=p^{2(k_1+k_2)}.$$
Note that $C\subseteq C^\perp$, and then $C = C^\perp$, that is, $C$ is self-dual.

Suppose that $C$ is self-dual, then $C$ is self-orthogonal.
By Lemma \ref{mainlemma} and Theorem \ref{generatormatrix}(1)(2), we have that
$${\rm dim}\widehat{(C:v)}=k_1+k_2;$$
$${\rm dim}\widehat{(C^\perp:v)}=n-k+k_3=n-k_1-k_2,$$
and
$${\rm dim}\overline{(C:(1-v))}=k_1+k_3;$$
$${\rm dim}\overline{(C^\perp:(1-v))}=n-k+k_2=n-k_1-k_3.$$
Since $C=C^\perp$, we have that $n=2(k_1+k_2), k_2=k_3$.
\end{proof}

Let $A, B$ be the codes over $R$. We denote by $A\oplus B=\{a+b\,|\,a\in A, b\in B\}$.
\begin{Theorem}\label{maindecomposition}
With the above notations, let $C$ be a linear code of length $n$ over $R$.
Then $C$ can be uniquely expressed as $C=vC_2\oplus (1-v)C_1$.
Moreover, we also have $C^\perp=vC^\perp_2\oplus (1-v)C^\perp_1$.
\end{Theorem}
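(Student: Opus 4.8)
The plan is to prove the two decompositions separately, starting with $C = vC_2 \oplus (1-v)C_1$ and then deriving the statement for $C^\perp$ from it. The key observation is the idempotent structure of $R$: since $v^2 = v$ and $(1-v)^2 = (1-v)$ with $v(1-v) = 0$, every element $c = r + vq \in R^n$ can be split as $c = (1-v)r + v(r+q)$. This is precisely the Gray-map decomposition underlying $R \cong F_p \times F_p$, applied coordinatewise. So the natural first step is to show both inclusions of $C = vC_2 \oplus (1-v)C_1$.

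For the inclusion $C \subseteq vC_2 \oplus (1-v)C_1$, I would take an arbitrary $c = r + vq \in C$ with $r, q \in F_p^n$, and write $c = (1-v)r + v(r+q)$. By definition $r \in C_1$ and $r + q \in C_2$, so each summand lies in the respective piece; the only subtlety is that $vC_2$ should be read as $\{v\,a \mid a \in C_2\}$ and likewise for $(1-v)C_1$, which I would make explicit. For the reverse inclusion, I would show $vC_2 \subseteq C$ and $(1-v)C_1 \subseteq C$: if $a + b \in C_2$ arises from $a + vb \in C$, then $v(a+b) = v(a+vb) \in C$ since $v^2 = v$; similarly if $r \in C_1$ arises from $r + vq \in C$, then $(1-v)r = (1-v)(r+vq) \in C$. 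These are exactly the identities already used in the proof of Lemma~\ref{mainlemma}, so I can cite that computation. Uniqueness of the expression follows because the two pieces intersect only in $0$: any element of $vC_2 \cap (1-v)C_1$ is killed by both $v$ and $1-v$, hence is zero, and the idempotent orthogonality $v(1-v)=0$ guarantees the sum is direct as $R$-modules.

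The statement for $C^\perp$ then follows by applying the first part to the linear code $C^\perp$ in place of $C$, giving $C^\perp = v\,(C^\perp)_2 \oplus (1-v)\,(C^\perp)_1$, and then identifying $(C^\perp)_2 = C_2^\perp$ and $(C^\perp)_1 = C_1^\perp$. This identification is where Theorem~\ref{generatormatrix}\,(3) does the work: combining Lemma~\ref{mainlemma} (which gives $\widehat{(C^\perp:v)} = (C^\perp)_2$ and $\overline{(C^\perp:(1-v))} = (C^\perp)_1$) with part (3) of that theorem yields $(C^\perp)_2 = \widehat{(C^\perp:v)} = (\widehat{(C:v)})^\perp = C_2^\perp$ and similarly $(C^\perp)_1 = C_1^\perp$. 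I expect the main obstacle to be bookkeeping rather than genuine difficulty: one must keep the projections $\overline{\phantom{r}}$ and $\widehat{\phantom{r}}$ straight and be careful that $vC_2$ and $(1-v)C_1$ denote $R$-submodules (indeed $F_p$-vector spaces, since $v \cdot F_p \cong F_p$), so that the direct-sum decomposition is an equality of $R$-modules. Once the identifications $(C^\perp)_i = C_i^\perp$ are in hand, the second decomposition is immediate from the first.
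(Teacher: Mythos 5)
Your proposal is correct, and it differs from the paper's proof in one substantive way. For the decomposition $C = vC_2 \oplus (1-v)C_1$, the paper proves only the inclusion $vC_2 \oplus (1-v)C_1 \subseteq C$ (routing it through the submodule quotients $(C:v)$, $(C:(1-v))$ and Lemma~\ref{mainlemma}) and then forces equality by a cardinality count: $|vC_2\oplus(1-v)C_1| = |C_1||C_2| = p^{k_1+k_3}p^{k_1+k_2} = p^{2k_1+k_2+k_3} = |C|$, which relies on the generator-matrix classification of Section~2 (the type of $C$ and the dimension formulas $\dim C_1 = k_1+k_3$, $\dim C_2 = k_1+k_2$). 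You instead prove both inclusions directly: the forward inclusion $C \subseteq vC_2 \oplus (1-v)C_1$ follows from the identity $r+vq = (1-v)r + v(r+q)$ together with the definitions of $C_1$ and $C_2$, and the reverse inclusion from $v(a+vb)=v(a+b)$ and $(1-v)(r+vq)=(1-v)r$. This is more elementary and self-contained, since it makes the first decomposition independent of the generator-matrix machinery; the paper's counting route buys nothing here except reuse of facts it had already established. One small point you should make explicit: trivial intersection of $vC_2$ and $(1-v)C_1$ gives uniqueness of the summands $va_2$ and $(1-v)a_1$, but the theorem's uniqueness claim is about the coefficients $a_2 \in C_2$, $a_1 \in C_1$ themselves, so you also need that $x \mapsto vx$ and $x \mapsto (1-v)x$ are injective on $F_p^n$ (immediate, since $vt = 0$ with $t \in F_p$ forces $t=0$); the paper handles this directly by cancelling in $v(a_2-b_2) = (1-v)(b_1-a_1)$. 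Your treatment of the second statement, $C^\perp = vC_2^\perp \oplus (1-v)C_1^\perp$, is exactly the paper's: apply the first part to $C^\perp$ and identify the pieces via Lemma~\ref{mainlemma} and Theorem~\ref{generatormatrix}\,(3).
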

\begin{proof}
We first prove the uniqueness of the expression of every element in $vC_2\oplus (1-v)C_1$.
Let $va_2+(1-v)a_1=vb_2+(1-v)b_1$, where $a_2, b_2 \in C_2$ and $a_1, b_1 \in C_1$.
Then $v(a_2-b_2)=(1-v)(b_1-a_1)$, which implies that $a_1=b_1$ and $a_2=b_2$.
Hence $|vC_2\oplus (1-v)C_1|=|C_1||C_2|=p^{k_1+k_3}p^{k_1+k_2}=p^{2k_1+k_2+k_3}=|C|$.

Next we prove that $vC_2\oplus (1-v)C_1\subseteq C$. Let $a\in (C:v)$ and $b\in (C:(1-v))$.
Then $va \in C$ and $(1-v)b\in C$. Assume  $a=a_1+(1-v)a_2, b=b_1+vb_2$,
where $a_1, a_2, b_1, b_2 \in F_p^n$. Then $\widehat{a}=a_1\in C_2, \overline{b}=b_1\in C_1$. Thus
$$v\widehat{a}+(1-v)\overline{b}=va_1+(1-v)b_1=va+(1-v)b\in C.$$
Hence $vC_2\oplus (1-v)C_1\subseteq C$. Note that $|vC_2\oplus (1-v)C_1|=|C|$,
therefore $C=vC_2\oplus (1-v)C_1$.

Finally, we prove  the second statement. Combining the first statement, Theorem \ref{generatormatrix}(3)
with Lemma \ref{mainlemma} we have
\begin{eqnarray*}
C^\perp & = & v\widehat{(C^\perp:v)}\oplus (1-v)\overline{(C^\perp:(1-v))}\\
        & = & v(\widehat{(C:v)})^\perp\oplus (1-v)(\overline{(C:(1-v))})^\perp\\
        & = & vC^\perp_2\oplus (1-v)C^\perp_1,
\end{eqnarray*}
which is the desired result.
\end{proof}
\begin{Corollary}\label{maincor}
With the above notations, let $C$ be a linear code of length $n$ over $R$.
Then $C$ is a self-dual code if and only if $C_1$ and $C_2$ are both self-dual codes.
\end{Corollary}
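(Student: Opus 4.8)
The plan is to deduce the corollary directly from the decomposition established in Theorem \ref{maindecomposition}, which supplies both $C=vC_2\oplus (1-v)C_1$ and $C^\perp=vC_2^\perp\oplus (1-v)C_1^\perp$. Since each of these is written in the same normal form $vD_2\oplus(1-v)D_1$, the entire argument reduces to comparing the two summands separately, and the real content has already been carried out in the uniqueness part of Theorem \ref{maindecomposition}.

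For the ``if'' direction I would assume $C_1=C_1^\perp$ and $C_2=C_2^\perp$ and simply substitute into the formula for $C^\perp$:
$$C^\perp=vC_2^\perp\oplus (1-v)C_1^\perp=vC_2\oplus (1-v)C_1=C,$$
so $C$ is self-dual. This direction is immediate and needs no further work.

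For the ``only if'' direction I would assume $C=C^\perp$, so that $vC_2\oplus (1-v)C_1=vC_2^\perp\oplus (1-v)C_1^\perp$. The key point is the uniqueness of representations in such a direct sum, proved inside Theorem \ref{maindecomposition}: it forces the two $v$-summands and the two $(1-v)$-summands to coincide, giving $vC_2=vC_2^\perp$ and $(1-v)C_1=(1-v)C_1^\perp$. To pass from $vC_2=vC_2^\perp$ to $C_2=C_2^\perp$ I would apply the projection $\widehat{\,\,}$, noting that $\widehat{va}=a$ for $a\in F_p^n$, so that $\widehat{vC_2}=C_2$ and $\widehat{vC_2^\perp}=C_2^\perp$; equivalently, $a\mapsto va$ is injective on $F_p^n$ and can be cancelled. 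The same reasoning with $1-v$ yields $C_1=C_1^\perp$.

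The argument is essentially bookkeeping once Theorem \ref{maindecomposition} is in hand, so I do not expect a genuine obstacle. The only place requiring a moment of care is this cancellation step: one must check that $vC_2=vC_2^\perp$ really forces $C_2=C_2^\perp$ rather than merely equality after some projection. I would make this rigorous either by the $\widehat{\,\,}$-projection identity above, or, more in the spirit of the paper, by invoking Lemma \ref{mainlemma} together with Theorem \ref{generatormatrix}(3) to identify $C_2=\widehat{(C:v)}$ and $C_2^\perp=(\widehat{(C:v)})^\perp=\widehat{(C^\perp:v)}$ directly, which makes the equivalence transparent.
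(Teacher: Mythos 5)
Your proposal is correct, and your ``if'' direction is word-for-word the paper's: substitute $C_1^\perp=C_1$, $C_2^\perp=C_2$ into $C^\perp=vC_2^\perp\oplus(1-v)C_1^\perp$ from Theorem \ref{maindecomposition}. For the ``only if'' direction, however, you take a genuinely different route. The paper never equates the two decompositions; it argues directly through the torsion-code identifications, computing
$C_2^\perp=(\widehat{(C:v)})^\perp=\widehat{(C^\perp:v)}=\widehat{(C:v)}=C_2$
(and likewise for $C_1$) by combining Lemma \ref{mainlemma} with Theorem \ref{generatormatrix}(3) and the hypothesis $C=C^\perp$ --- a one-line identity chase with no element-picking. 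You instead compare $vC_2\oplus(1-v)C_1=vC_2^\perp\oplus(1-v)C_1^\perp$ and cancel the idempotent summands. Two remarks on that. First, the uniqueness proved \emph{inside} Theorem \ref{maindecomposition} is only element-level (each element of $vC_2\oplus(1-v)C_1$ has a unique expression); the statement you actually need --- that equality of two such subspace decompositions forces equality of the components --- is slightly stronger, though it follows by the same two-line trick (decompose $vx$, multiply by $v$), and indeed this is precisely the uniqueness argument the paper itself carries out later, in the second half of Theorem \ref{allcodes}. Second, your cancellation step is sound: $\widehat{va}=a$ for $a\in F_p^n$, so $\widehat{\,\,}$ inverts $a\mapsto va$ on $F_p^n$ and $vC_2=vC_2^\perp$ does give $C_2=C_2^\perp$; you rightly flag this as the one point needing care, and your fallback (identifying $C_2=\widehat{(C:v)}$ and $C_2^\perp=\widehat{(C^\perp:v)}$ via Lemma \ref{mainlemma} and Theorem \ref{generatormatrix}(3)) is exactly the paper's actual proof. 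In short: no gap; the paper's version is marginally shorter because it bypasses the decomposition equality entirely, while yours has the virtue of isolating the uniqueness-of-decomposition principle that the paper will need again for Theorem \ref{allcodes}.
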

\begin{proof}
$(\Longrightarrow)$ Let $C$ be a self-dual code. Then by Lemma \ref{mainlemma} and Theorem \ref{generatormatrix}(3)
we have
$$C^\perp_1=(\overline{(C:(1-v)})^\perp=\overline{(C^\perp:(1-v))}=\overline{(C:(1-v))}=C_1$$
and
$$C^\perp_2=(\widehat{(C:v)})^\perp=\widehat{(C^\perp:v)}=\widehat{(C:v)}=C_2,$$
that is, $C_1$ and $C_2$ are both self-dual codes.

$(\Longleftarrow)$ Let $C_1$ and $C_2$ be both self-dual codes. Then by Theorem \ref {maindecomposition},
$$C^\perp=vC^\perp_2\oplus (1-v)C^\perp_1=vC_2\oplus (1-v)C_1=C.$$
So $C$ is self-dual.
\end{proof}

\begin{Remark}\label{expression-1}
According to Theorem \ref{maindecomposition} and Corollary \ref{maincor},
it is clear that a self-dual code over $R$ can be explicitly expressed via
two self-dual codes over $F_p$. We need to study the converse part,
which is an interesting step.
\end{Remark}

\section{Construction of self-dual codes over $F_p+vF_p$}
The construction of self-dual codes over $R$ depends on the following theorem.
\begin{Theorem}\label{selfdual}
Suppose that $\mathcal{C}_1$ and $\mathcal{C}_2$ are linear codes of length $n$ over $F_p$
with generator matrices $G_1$ and $G_2$ respectively, and let $l_1$ and $l_2$ be
the dimensions of $\mathcal{C}_1$ and $\mathcal{C}_2$ respectively.
Then the code $C$ over $R$ generated by the matrix $G$,

\begin{equation*}
G =
\begin{cases}
\begin{pmatrix}
vG_2\\0
\end{pmatrix}
+(1-v)G_1, & \text{if $l_1> l_2$};\\
vG_2+
\begin{pmatrix}
(1-v)G_1\\0
\end{pmatrix}, & \text{if $l_1< l_2$};\\
v G_2+(1-v)G_1, & \text{if $l_1 =l_2$}.
\end{cases}
\end{equation*}
satisfies
$$C=v\mathcal{C}_2\oplus (1-v)\mathcal{C}_1,~~ \widehat{(C:v)}=\mathcal{C}_2,  \,\,\, \overline{(C:(1-v))}=\mathcal{C}_1.$$
\end{Theorem}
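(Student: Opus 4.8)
The plan is to reduce everything to the orthogonal idempotent decomposition of $R$. Since $v^2=v$, $(1-v)^2=1-v$, $v(1-v)=0$ and $v+(1-v)=1$, the ring splits as $R=vR\oplus(1-v)R$ with both summands isomorphic to $F_p$. This orthogonality is the engine of the whole argument: multiplying any element $vr+(1-v)s$ of $R^n$ by $v$ returns $vr$ and by $1-v$ returns $(1-v)s$, so the two components can always be isolated inside an $R$-span. I would first prove the core identity $C=v\mathcal{C}_2\oplus(1-v)\mathcal{C}_1$, and then obtain the two torsion statements as an easy consequence of Lemma~\ref{mainlemma}.

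For the core identity I would handle the three cases uniformly. In each case every row of $G$ has the shape $vr+(1-v)s$, where $r$ is either a row of $G_2$ or $0$ (on the rows padded by zeros) and $s$ is either a row of $G_1$ or $0$. Hence every generator of $C$ lies in $v\mathcal{C}_2+(1-v)\mathcal{C}_1$, giving the inclusion $C\subseteq v\mathcal{C}_2\oplus(1-v)\mathcal{C}_1$. For the reverse inclusion I would multiply each generator row by $v$ and by $1-v$: this produces $vr$ and $(1-v)s$ separately, so the $R$-span of the rows contains $v\,(\text{every row of }G_2)$ and $(1-v)\,(\text{every row of }G_1)$, whence it contains $v\mathcal{C}_2$ and $(1-v)\mathcal{C}_1$ and therefore their sum. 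The only point requiring attention is that the zero-padded rows contribute $0$ to the $v$-part (when $l_1>l_2$) or to the $(1-v)$-part (when $l_1<l_2$), so that no spurious words are introduced while the full spans of $G_1$ and $G_2$ are still recovered; this case bookkeeping is where I expect the main, though mild, obstacle to lie.

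With $C=v\mathcal{C}_2\oplus(1-v)\mathcal{C}_1$ in hand, the torsion statements are immediate. Writing a typical codeword as $vc_2+(1-v)c_1=c_1+v(c_2-c_1)$ with $c_1\in\mathcal{C}_1$ and $c_2\in\mathcal{C}_2$ exhibits it in the canonical form $a+vb$ with $a=c_1$ and $a+b=c_2$; reading off the codes $C_1$ and $C_2$ defined in Section~2 then gives $C_1=\mathcal{C}_1$ and $C_2=\mathcal{C}_2$. Finally Lemma~\ref{mainlemma} yields $\widehat{(C:v)}=C_2=\mathcal{C}_2$ and $\overline{(C:(1-v))}=C_1=\mathcal{C}_1$, completing the proof. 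Alternatively, the uniqueness of the decomposition established in Theorem~\ref{maindecomposition} identifies the two summands directly.
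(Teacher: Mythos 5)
Your proof is correct, but it runs in the opposite direction from the paper's and is genuinely more structural. The paper proves the torsion identities first: $\mathcal{C}_2\subseteq\widehat{(C:v)}$ comes from the generators, while the reverse inclusion is obtained by expanding an arbitrary $vx\in C$ as an $R$-linear combination of the rows of $G$ and collapsing it via $vx=v^2x$ (and similarly for $\mathcal{C}_1$); only afterwards does it deduce $C=v\mathcal{C}_2\oplus(1-v)\mathcal{C}_1$ by invoking Theorem~\ref{maindecomposition} together with Lemma~\ref{mainlemma}. You instead prove the decomposition first, working entirely at the level of generators: $C\subseteq v\mathcal{C}_2\oplus(1-v)\mathcal{C}_1$ because the right-hand side is an $R$-submodule containing every row of $G$, and the reverse inclusion because multiplying the rows by the orthogonal idempotents $v$ and $1-v$ isolates $vg_{2i}$ and $(1-v)g_{1i}$, hence recovers $v\mathcal{C}_2$ and $(1-v)\mathcal{C}_1$; you then read off $C_1=\mathcal{C}_1$ and $C_2=\mathcal{C}_2$ from the normal form $vc_2+(1-v)c_1=c_1+v(c_2-c_1)$ and finish with Lemma~\ref{mainlemma}. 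Your route buys two things: it bypasses Theorem~\ref{maindecomposition} entirely (you cite it only as an optional shortcut), and it replaces the paper's coefficient computation with a uniform idempotent argument covering the three cases $l_1>l_2$, $l_1<l_2$, $l_1=l_2$ at once, with the zero-padding bookkeeping (which you correctly flag and resolve) as the only case-dependence. The one step you should make explicit is the closure computation $(a+vb)\bigl(vc_2+(1-v)c_1\bigr)=v(a+b)c_2+(1-v)ac_1$, which is what legitimizes passing from ``every generator lies in $v\mathcal{C}_2\oplus(1-v)\mathcal{C}_1$'' to ``$C$ lies in it''; the paper's element-level computation avoids needing this observation, and keeps the torsion codes as the primary object, consistent with their role elsewhere in the paper.
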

\begin{proof}
We only prove the case  $l_1>l_2$ in the following, as the proof of the other cases are similar to this case.
Assume that $G_1=(g_{11}, g_{12},\cdots,g_{1,l_1})^T$, $G_2=(g_{21}, g_{22},\cdots,g_{2,l_2})^T$,
then
$$G=
\begin{pmatrix}
vg_{21}+(1-v)g_{11} \\
 vg_{22}+(1-v)g_{12}\\
 \vdots\\
 vg_{2,l_2}+(1-v)g_{1,l_2}\\
  (1-v)g_{1,l_2+1}\\
  \vdots\\
  (1-v)g_{1,l_1}
\end{pmatrix}.
$$

Since $vg_{2i}+(1-v)g_{1i} \in C$, i.e. $g_{1i}+v(g_{2i}-g_{1i})\in C$, for every $1\leq i\leq l_2$,
by Lemma \ref{mainlemma} we have
$$g_{2i}=g_{1i}+(g_{2i}-g_{1i})\in \widehat{(C:v)},$$
for every $1\leq i\leq l_2$. Therefore $\mathcal{C}_2\subseteq \widehat{(C:v)}$.

Let $y\in \widehat{(C:v)}$, then there exists $x\in (C:v)$ such that $y=\widehat{x}$.
Since $vx \in C$, we may assume that
$$vx=\sum_{i=1}^{l_2}(a_i+vs_i)[vg_{2i}+(1-v)g_{1i}]+\sum_{l_2+1}^{l_1}(a_i+vs_i)[(1-v)g_{1i}],$$
where $a_i+vs_i\in F_p+vF_p$, for $1\leq i \leq l_1$. So
$$vx=v^2x=v\cdot vx=v\sum_{i=1}^{l_2}(a_i+s_i)g_{2i}.$$

Let $x=x_1+vx_2, x_1, x_2\in F_p^n$. Then $\widehat{x}=x_1+x_2$. Thus
$$v(x_1+x_2)=vx=v\sum_{i=1}^{l_2}(a_i+s_i)g_{2i}.$$
Hence $x_1+x_2=\sum_{i=1}^{l_2}(a_i+s_i)g_{2i}$. Therefore we have
$$y=\widehat{x}=x_1+x_2=\sum_{i=1}^{l_2}(a_i+s_i)g_{2i} \in \mathcal{C}_2,$$
which gives $\widehat{(C:v)}\subseteq \mathcal{C}_2$. From the above facts we get that $\widehat{(C:v)}= \mathcal{C}_2$.

On the other hand, since
$$vg_{2i}+(1-v)g_{1i} \in C, i.e. \;\; g_{1i}+v(g_{2i}-g_{1i})\in C,$$
for every $1\leq i\leq l_1$.  Here $g_{2i}=0$, if $i> l_2$.
By Lemma \ref{mainlemma} we have $g_{1i}\in \overline{(C:(1-v))}$,
for every $1\leq i\leq l_1$. Therefore $\mathcal{C}_1\subseteq \overline{(C:(1-v))}$.

Let $z\in \overline{(C:(1-v))}$, then there exists $s\in (C:(1-v))$ such that $z=\overline{s}$.
Since $(1-v)s \in C$, we  assume that
$$(1-v)s=\sum_{i=1}^{l_2}(b_i+vt_i)[vg_{2i}+(1-v)g_{1i}]+\sum_{l_2+1}^{l_1}(b_i+vt_i)[(1-v)g_{1i}],$$
where $b_i+vt_i\in F_p+vF_p$, for $1\leq i \leq l_1$. So
$$(1-v)s=(1-v)^2s=(1-v)\cdot (1-v)s=(1-v)\sum_{i=1}^{l_1}b_ig_{1i}.$$

Let $s=s_1+vs_2, s_1, s_2\in F_p^n$. Then $\overline{s}=s_1$. Thus
$$(1-v)s_1=(1-v)s=(1-v)\sum_{i=1}^{l_1}b_ig_{1i}.$$
Hence $s_1=\sum_{i=1}^{l_1}b_ig_{1i}$. Therefore we have
$$z=\overline{s}=s_1=\sum_{i=1}^{l_1}b_ig_{1i} \in \mathcal{C}_1,$$
which gives $\overline{(C:(1-v))}\subseteq \mathcal{C}_1$. Thus we get  $\overline{(C:(1-v))}= \mathcal{C}_1$.

Finally, by Lemma \ref{mainlemma} and Theorem \ref{maindecomposition},
\begin{eqnarray*}
C & = & v \widehat{(C:v)} \oplus (1-v)\overline{(C:(1-v))}\\
 & = & v\mathcal{C}_2 \oplus (1-v)\mathcal{C}_1,
\end{eqnarray*}
which gives our desired result. Thus we complete the proof.
\end{proof}
\begin{Corollary}\label{construction}
Suppose that $\mathcal{C}_1$ and $\mathcal{C}_2$ are two self-dual codes of length $n$ over $F_p$
with generator matrices $G_1$ and $G_2$ respectively,
then the code $C$ over $R$ generated by the matrix $G$ as follows
is also self-dual, where
$$
G = v G_2+(1-v)G_1.
$$
\end{Corollary}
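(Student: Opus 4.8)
The plan is to recognize this corollary as an almost immediate consequence of Theorem \ref{selfdual} combined with Corollary \ref{maincor}, so the proof should be short and structural rather than computational. First I would record the elementary but essential observation that any self-dual code of length $n$ over $F_p$ has dimension exactly $n/2$; hence $l_1 = \dim \mathcal{C}_1 = n/2 = \dim \mathcal{C}_2 = l_2$. This equality of dimensions places us precisely in the case $l_1 = l_2$ of Theorem \ref{selfdual}, whose generator matrix is $G = vG_2 + (1-v)G_1$, which is exactly the matrix appearing in the statement of the corollary.

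Next, invoking Theorem \ref{selfdual} in this $l_1 = l_2$ branch, I would extract its three conclusions for our code $C$, namely $C = v\mathcal{C}_2 \oplus (1-v)\mathcal{C}_1$, $\widehat{(C:v)} = \mathcal{C}_2$, and $\overline{(C:(1-v))} = \mathcal{C}_1$. The two torsion-code identities are the ones I would lean on: combining them with Lemma \ref{mainlemma}, which identifies $\widehat{(C:v)} = C_2$ and $\overline{(C:(1-v))} = C_1$, forces $C_1 = \mathcal{C}_1$ and $C_2 = \mathcal{C}_2$. In other words, the two linear codes over $F_p$ canonically associated to $C$ are exactly the given self-dual codes $\mathcal{C}_1$ and $\mathcal{C}_2$.

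Finally, since $\mathcal{C}_1$ and $\mathcal{C}_2$ are self-dual by hypothesis and I have just shown $C_1 = \mathcal{C}_1$ and $C_2 = \mathcal{C}_2$, I would apply Corollary \ref{maincor}, which asserts that $C$ is self-dual if and only if both $C_1$ and $C_2$ are self-dual, to conclude at once that $C$ is self-dual. Alternatively, one could bypass Corollary \ref{maincor} and argue directly from the decomposition $C = v\mathcal{C}_2 \oplus (1-v)\mathcal{C}_1$ using Theorem \ref{maindecomposition} to compute $C^\perp = v\mathcal{C}_2^\perp \oplus (1-v)\mathcal{C}_1^\perp = v\mathcal{C}_2 \oplus (1-v)\mathcal{C}_1 = C$, but routing through Corollary \ref{maincor} is the cleanest route.

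There is essentially no genuine obstacle here, since all the heavy lifting has already been carried out in Theorem \ref{selfdual} and Corollary \ref{maincor}; the corollary is really the payoff that ties those results together. The only point that genuinely requires care is the opening dimension count confirming $l_1 = l_2$: this is what guarantees that the matrix $G = vG_2 + (1-v)G_1$ stated in the corollary coincides with the one produced by the $l_1 = l_2$ branch of Theorem \ref{selfdual}, so that the theorem is applicable as stated. Once that matching is noted, everything else follows formally.
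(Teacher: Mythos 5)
Your proposal is correct and follows essentially the same route as the paper: note $l_1=l_2$, invoke Theorem \ref{selfdual} to identify $\widehat{(C:v)}=\mathcal{C}_2$ and $\overline{(C:(1-v))}=\mathcal{C}_1$, then conclude self-duality via the decomposition of $C^\perp$. Your primary path cites Corollary \ref{maincor} while the paper inlines the computation $C^\perp=v\mathcal{C}_2^\perp\oplus(1-v)\mathcal{C}_1^\perp=v\mathcal{C}_2\oplus(1-v)\mathcal{C}_1=C$ directly from Theorem \ref{maindecomposition}, but these are the same argument (the backward direction of Corollary \ref{maincor} is exactly that computation), and indeed you list the paper's version as your alternative.
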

\begin{proof}
Note that $l_1=l_2$ in this case. By Lemma \ref{mainlemma},  Theorem \ref{maindecomposition}
and Theorem \ref{selfdual} we have
\begin{eqnarray*}
C^\perp & = & v(\widehat{(C:v)})^\perp\oplus (1-v)(\overline{(C:(1-v))})^\perp\\
        & = & v\mathcal{C}^\perp_2\oplus (1-v)\mathcal{C}^\perp_1\\
        & = & v\mathcal{C}_2\oplus (1-v)\mathcal{C}_1\\
        & = & C.
\end{eqnarray*}
So $C$ is self-dual.
\end{proof}

\begin{Theorem}\label{allcodes}
All the self-dual codes over $R$ are given by
$$v \mathcal{C}_2\oplus (1-v)\mathcal{C}_1,$$
where $\mathcal{C}_1, \mathcal{C}_2$ range over all the self-dual codes over $F_p$, respectively.
Moreover, this expression is unique, i.e. if
$$v \mathcal{C}_2\oplus (1-v)\mathcal{C}_1 = v \mathcal{C}'_2\oplus (1-v)\mathcal{C}'_1,$$
then $\mathcal{C}_2 = \mathcal{C}'_2$ and $\mathcal{C}_1 = \mathcal{C}'_1$,
where $\mathcal{C}_1, \mathcal{C}_2, \mathcal{C}'_1 $ and $\mathcal{C}'_2$ are all
self-dual codes over $F_p$.
\end{Theorem}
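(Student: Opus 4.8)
The plan is to recognize that this theorem is essentially an assembly of the structural results already proved, packaged as a bijection. Concretely, I would show that the map sending a pair $(\mathcal{C}_1, \mathcal{C}_2)$ of self-dual codes over $F_p$ to the code $v\mathcal{C}_2 \oplus (1-v)\mathcal{C}_1$ over $R$ is a well-defined bijection onto the set of self-dual codes over $R$. The first assertion of the theorem then amounts to well-definedness (the output is genuinely self-dual) together with surjectivity, and the second assertion is injectivity.

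First I would settle well-definedness. Since $\mathcal{C}_1$ and $\mathcal{C}_2$ are self-dual over the field $F_p$, each has dimension $n/2$, so $l_1 = l_2$ and we are in the last case of Theorem \ref{selfdual}. Corollary \ref{construction} then shows that the code $C$ generated by $vG_2 + (1-v)G_1$ is self-dual over $R$, while Theorem \ref{selfdual} identifies this $C$ precisely as $v\mathcal{C}_2 \oplus (1-v)\mathcal{C}_1$. Hence every code of the stated form is indeed a self-dual code over $R$. For surjectivity I would take an arbitrary self-dual code $C$ over $R$: Theorem \ref{maindecomposition} gives $C = vC_2 \oplus (1-v)C_1$ with $C_1, C_2$ the associated $F_p$-codes, and Corollary \ref{maincor} guarantees that $C_1$ and $C_2$ are themselves self-dual. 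Thus $C$ is realized by the pair $(C_1, C_2)$, so every self-dual code over $R$ appears in the list.

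For uniqueness, the crucial point is that the two factor codes are recoverable intrinsically from $C$. Suppose $C = v\mathcal{C}_2 \oplus (1-v)\mathcal{C}_1 = v\mathcal{C}'_2 \oplus (1-v)\mathcal{C}'_1$. Applying Theorem \ref{selfdual} to each representation recovers the torsion codes: from the first, $\widehat{(C:v)} = \mathcal{C}_2$ and $\overline{(C:(1-v))} = \mathcal{C}_1$; from the second, $\widehat{(C:v)} = \mathcal{C}'_2$ and $\overline{(C:(1-v))} = \mathcal{C}'_1$. Because $\widehat{(C:v)}$ and $\overline{(C:(1-v))}$ depend only on $C$, we immediately conclude $\mathcal{C}_2 = \mathcal{C}'_2$ and $\mathcal{C}_1 = \mathcal{C}'_1$.

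I expect no serious obstacle, since all the substantive content has already been established in Theorems \ref{maindecomposition} and \ref{selfdual} and Corollaries \ref{maincor} and \ref{construction}; the remaining task is bookkeeping. The one point deserving a moment of care is that Theorem \ref{selfdual} is stated for \emph{arbitrary} linear codes $\mathcal{C}_1, \mathcal{C}_2$ over $F_p$, not merely self-dual ones, which is exactly what makes the uniqueness argument legitimate: it lets us extract $\mathcal{C}_1$ and $\mathcal{C}_2$ from $C$ no matter which representation we start from.
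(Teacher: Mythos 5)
Your proposal is correct, and the first half (well-definedness plus surjectivity) is exactly the paper's argument: Corollary \ref{maincor} together with Theorem \ref{maindecomposition} shows every self-dual code over $R$ arises as $v\mathcal{C}_2\oplus(1-v)\mathcal{C}_1$, and Corollary \ref{construction} with Theorem \ref{selfdual} shows every such code is self-dual. Where you diverge is the uniqueness step. The paper argues element-wise: for $x\in\mathcal{C}_2$ it writes $vx = vx' + (1-v)y'$ in the primed decomposition, multiplies by $v$ so that $v(x-x') = v(1-v)y' = 0$, and concludes $x = x'$ (and symmetrically for $\mathcal{C}_1$), using nothing beyond the idempotent relations $v^2=v$ and $v(1-v)=0$. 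You instead observe that the component codes are intrinsic invariants of $C$: by Theorem \ref{selfdual}, whichever representation you start from, $\mathcal{C}_2 = \widehat{(C:v)}$ and $\mathcal{C}_1 = \overline{(C:(1-v))}$, and the right-hand sides depend only on $C$. Both are sound; there is a small implicit step in yours (Theorem \ref{selfdual} is stated for the code generated by the assembled matrix $G$, so you must note that this generated code coincides with $v\mathcal{C}_2\oplus(1-v)\mathcal{C}_1$ in order to transfer its torsion-code identities to $C$, which you do). Your route is more conceptual and explains \emph{why} uniqueness holds — the summands are recoverable as torsion codes — while the paper's is more elementary and self-contained, requiring neither Theorem \ref{selfdual} nor even linearity of the component codes; your closing remark that Theorem \ref{selfdual} holds for arbitrary linear codes is accurate but not actually needed here, since the statement already assumes all four codes are self-dual.
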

\begin{proof}
First by Corollary~\ref{maincor}, every self-dual code over $R$
can be explicitly expressed by two fixed self-dual codes over $F_p$ as in the above form.

Next, let $\mathcal{C}_1, \mathcal{C}_2$ be arbitrary two self-dual codes over $F_p$.
Assume that $G_1$ and $G_2$ are generator matrices for $\mathcal{C}_1, \mathcal{C}_2$, respectively.
Then according to Corollary \ref{construction} we know that the code $C$ generated by the matrix
$vG_1 + (1-v)G_2$ is self-dual and satisfies $C = v \mathcal{C}_2\oplus (1-v)\mathcal{C}_1$.
This completes the proof of the first statement.

Let $x\in \mathcal{C}_2$. Since $v \mathcal{C}_2\oplus (1-v)\mathcal{C}_1 = v \mathcal{C}'_2\oplus (1-v)\mathcal{C}'_1$,
we have that
$$vx\in v \mathcal{C}_2 \subseteq v \mathcal{C}_2\oplus (1-v)\mathcal{C}_1 = v \mathcal{C}'_2\oplus (1-v)\mathcal{C}'_1.$$
Assuming  $vx =vx'+(1-v)y'$ where $x'\in \mathcal{C}'_2, y'\in \mathcal{C}'_1$, we get that
$v(x-x')=(1-v)y'$ and $v(x-x')=0$, so $x=x'$. Therefore $\mathcal{C}_2\subseteq \mathcal{C}'_2$.
Similarly, we have $\mathcal{C}'_2\subseteq \mathcal{C}_2$. Hence $\mathcal{C}_2 = \mathcal{C}'_2$.

Let $z\in \mathcal{C}_1$. Since $v \mathcal{C}_2\oplus (1-v)\mathcal{C}_1 = v \mathcal{C}'_2\oplus (1-v)\mathcal{C}'_1$,
we have that
$$(1-v)z\in (1-v) \mathcal{C}_1 \subseteq v \mathcal{C}_2\oplus (1-v)\mathcal{C}_1 = v \mathcal{C}'_2\oplus (1-v)\mathcal{C}'_1.$$
Setting $(1-v)z =va'+(1-v)z'$, where $a'\in \mathcal{C}'_2, z'\in \mathcal{C}'_1$, we get that
$(1-v)(z-z')=va'$ and $(1-v)(z-z')=0$, so $z=z'$. Therefore $\mathcal{C}_1\subseteq \mathcal{C}'_1$.
Similarly, we have $\mathcal{C}'_1\subseteq \mathcal{C}_1$. Hence $\mathcal{C}_1 = \mathcal{C}'_1$.
Thus we complete the proof.
\end{proof}
\begin{Corollary}
Let $N(R)$ be the number of self-dual codes of length $n$ over $R$
and $N(F_p)$ the number of self-dual codes of length $n$ over $F_p$.
Then
$$N(R) = N(F_p)^2.$$
\end{Corollary}
\begin{proof}
It follows immediately from Theorem \ref{allcodes}.
\end{proof}

The following lemma is well known and can be found from \cite{RS}.
\begin{lem}\label{conditions-2}
Let $F_q$ be a finite field with characteristic $p$. Then

{\rm (i)} If $p=2$ or $p\equiv 1\,({\rm mod}\,4)$, then a self-dual code of length $n$ exists over $F_q$
if and only if $n\equiv 0\,({\rm mod}\,2)$;

{\rm (ii)} If $p\equiv 3\,({\rm mod}\,4)$, then a self-dual code of length $n$ exists over $F_q$
if and only if $n\equiv 0\,({\rm mod}\,4)$.
\end{lem}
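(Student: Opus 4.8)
The plan is to reduce both directions to a single arithmetic fact: whether $-1$ is a square in $F_q$. The necessity of $n$ being even holds in every case, since $C=C^\perp$ forces $\dim C=\dim C^\perp=n/2$, so $n/2$ must be an integer. The genuine content lies in the finer divisibility of part (ii) together with the explicit constructions in the sufficiency directions, and both are governed by the squareness of $-1$ in $F_q$: under the stated hypotheses $-1$ is a square exactly when $p=2$ or $p\equiv 1\pmod 4$, and a non-square when $p\equiv 3\pmod 4$.

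For the necessity in (ii) I would first record the basic obstruction. Given a self-dual code $C$ of length $n$, after permuting coordinates (a permutation matrix $P$ satisfies $PP^T=I$ and hence preserves self-duality) I may assume $C$ has a systematic generator matrix $G=(I_{n/2}\mid A)$ with $A$ of size $(n/2)\times(n/2)$ over $F_q$; such an information set exists because $\dim C=n/2$. The self-orthogonality condition $GG^T=0$ then becomes $I_{n/2}+AA^T=0$, i.e. $AA^T=-I_{n/2}$, and taking determinants yields $\det(A)^2=(-1)^{n/2}$. Thus $(-1)^{n/2}$ is necessarily a square in $F_q$. When $-1$ is a non-square (the case $p\equiv 3\pmod 4$) this forces $n/2$ to be even, i.e. $n\equiv 0\pmod 4$, which is exactly the necessity claimed in (ii); the same computation imposes no constraint beyond parity when $-1$ is a square.

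For sufficiency I would exhibit small self-dual codes and close under orthogonal direct sums, using that concatenating a self-dual code of length $n_1$ with one of length $n_2$ produces a self-dual code of length $n_1+n_2$ (the cross inner products vanish automatically on disjoint coordinate blocks). If $-1$ is a square, I choose $i\in F_q$ with $i^2=-1$; then $(1,i)$ spans a self-dual code of length $2$, and direct sums realize every even length, establishing (i). If $-1$ is a non-square, length $2$ is excluded by the obstruction above, so I would instead build a seed of length $4$: picking $a,b\in F_q$ with $a^2+b^2=-1$, the rows of $\begin{pmatrix} 1 & 0 & a & b \\ 0 & 1 & b & -a \end{pmatrix}$ are pairwise orthogonal and each self-orthogonal, hence span a self-dual code of length $4$; direct sums then cover all $n\equiv 0\pmod 4$, giving (ii).

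The main obstacle I anticipate is the representation $a^2+b^2=-1$ required for the $p\equiv 3\pmod 4$ construction. I would settle this by a counting argument: in $F_q$ with $q$ odd the set $S=\{x^2:x\in F_q\}$ of squares has $(q+1)/2$ elements, and the set $T=\{-1-y^2:y\in F_q\}$ likewise has $(q+1)/2$ elements; since $|S|+|T|=q+1>q=|F_q|$, the two sets must intersect, and any common value furnishes $a,b$ with $a^2+b^2=-1$. This guarantees the length-$4$ seed exists and, combined with the direct-sum closure, completes the proof.
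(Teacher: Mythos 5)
The paper offers no proof of this lemma at all: it is quoted as well known, with a citation to Rains and Sloane \cite{RS}. So your argument has to be judged against the standard proof in the literature, and in outline it \emph{is} that proof: even length is forced by $\dim C=\dim C^\perp=n/2$; the systematic form $(I_{n/2}\mid A)$ together with $GG^T=0$ gives $AA^T=-I_{n/2}$, hence $\det(A)^2=(-1)^{n/2}$ and the mod-$4$ obstruction when $-1$ is a non-square; sufficiency comes from the length-$2$ seed $(1,i)$ when $i^2=-1$ exists, or the length-$4$ seed built from $a^2+b^2=-1$ (whose existence your counting argument over $S$ and $T$ correctly establishes), closed under orthogonal direct sums. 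Each of these individual steps is sound.

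The one genuine flaw is the arithmetic reduction on which everything rests: the squareness of $-1$ in $F_q$ is \emph{not} governed by the characteristic $p$ modulo $4$; it is governed by $q$ modulo $4$. If $p\equiv 3\pmod 4$ but $q=p^m$ with $m$ even, say $q=9$, then $q\equiv 1\pmod 4$ and $-1$ is a square in $F_q$, so your necessity argument for (ii) collapses there; indeed the code generated by $(1,i)$ with $i^2=-1$ in $F_9$ is a self-dual code of length $2$, so part (ii) as stated is simply false over $F_9$. This defect is inherited from the lemma's own phrasing — the theorem in \cite{RS} is stated in terms of $q\equiv 3\pmod 4$, not in terms of the characteristic — and it is harmless for the paper, which only ever invokes the lemma over the prime field $F_p$, where $q=p$ and your dichotomy is correct. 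So your proof is correct and complete exactly in the cases the paper needs (and, more generally, whenever $q\equiv 3\pmod 4$ in part (ii)); to make it fully rigorous you should either restate the lemma with the hypothesis on $q\bmod 4$ or restrict it to $q=p$, and prove the key claim about squareness of $-1$ from Euler's criterion in that setting.
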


Now Combining  Theorem \ref{allcodes} with Corollary \ref{conditions-2}, the following result is easily obtained.
\begin{Theorem}
With the above notations. Then the following two statements hold:

{\rm (i)} If $p=2$ or $p\equiv 1\,({\rm mod}\,4)$, then a self-dual code of length $n$ over $R$ exists
if and only if $n\equiv 0\,({\rm mod}\,2)$;

{\rm (ii)} If $p\equiv 3\,({\rm mod}\,4)$, then a self-dual code of length $n$ over $R$ exists
if and only if $n\equiv 0\,({\rm mod}\,4)$.
\end{Theorem}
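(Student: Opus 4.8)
The plan is to reduce the existence question over $R$ to the corresponding question over the field $F_p$, where the answer is already recorded in Lemma \ref{conditions-2}. The bridge is Theorem \ref{allcodes}: it tells us that every self-dual code of length $n$ over $R$ has the shape $v\mathcal{C}_2\oplus(1-v)\mathcal{C}_1$ with $\mathcal{C}_1,\mathcal{C}_2$ self-dual of length $n$ over $F_p$, and conversely any such pair produces a self-dual code over $R$. Accordingly, the key claim I would isolate first is the existence equivalence: a self-dual code of length $n$ exists over $R$ if and only if a self-dual code of length $n$ exists over $F_p$.

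To establish this equivalence I would treat the two directions separately. For the forward direction, I invoke Corollary \ref{maincor}: if $C$ is a self-dual code over $R$, then both $C_1$ and $C_2$ are self-dual codes over $F_p$, each of the same length $n$, so in particular a self-dual code of length $n$ over $F_p$ exists. For the reverse direction, given a self-dual code $\mathcal{C}$ of length $n$ over $F_p$ with generator matrix $G$, I apply Corollary \ref{construction} with $\mathcal{C}_1=\mathcal{C}_2=\mathcal{C}$ (so $G_1=G_2=G$), which produces the self-dual code $v\mathcal{C}\oplus(1-v)\mathcal{C}$ over $R$, again of length $n$. This closes the biconditional.

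Once the equivalence is in hand, I would simply transport Lemma \ref{conditions-2} across it, taking $q=p$ so that $F_q=F_p$ has characteristic $p$. In case (i), when $p=2$ or $p\equiv 1\,({\rm mod}\,4)$, the lemma asserts that a self-dual code of length $n$ over $F_p$ exists precisely when $n\equiv 0\,({\rm mod}\,2)$, and the equivalence yields the same criterion over $R$. In case (ii), when $p\equiv 3\,({\rm mod}\,4)$, the lemma's criterion $n\equiv 0\,({\rm mod}\,4)$ transfers verbatim. I do not anticipate a genuine obstacle here, since all the substantive work has already been carried out in Theorem \ref{allcodes} and its corollaries; the only point requiring mild care is to phrase the existence equivalence cleanly in both directions before quoting the field-level lemma, so that the ``if and only if'' in the conclusion is correctly inherited rather than merely one implication.
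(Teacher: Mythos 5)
Your proposal is correct and follows essentially the same route as the paper: the paper's one-line proof combines Theorem \ref{allcodes} with Lemma \ref{conditions-2}, which is exactly your reduction of existence over $R$ to existence over $F_p$ followed by quoting the field-level criterion. The only cosmetic difference is that you unbundle Theorem \ref{allcodes} into its ingredients (Corollary \ref{maincor} for the forward direction, Corollary \ref{construction} with $\mathcal{C}_1=\mathcal{C}_2$ for the converse), which makes the biconditional explicit but introduces no new idea.
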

\begin{Remark}
For $p=2$, the corresponding result has been obtained in \cite[Corollary 5.5]{CDD}.
\end{Remark}

\section{Examples}
According to Corollary \ref{construction}, the construction of self-dual codes over $R$
hinges on constructing the self-dual codes over $F_p$. See \cite{KL} on the building-up
construction of self-dual codes over $F_p$.
The following examples illustrate our results.
\begin{Example}
Consider the construction of self-dual code of length $4$ over $R=F_5+vF_5$.
Let $c=2$ be in $F_5$ such that $c^2=-1$ in $F_5$. Here $l_1=l_2=2$ and
$$
G_1=
\begin{pmatrix}
1 & 0 & 3 & 0\\
-3 & 1 & 1 & 2
\end{pmatrix};
$$
$$
G_2=
\begin{pmatrix}
0 & 2 & 0 & 1\\
-2 & 4 & 1 & 2
\end{pmatrix}.
$$
Then the code $C$ of length $4$ over $R=F_5+vF_5$ generated by the following matrix
$$
G=vG_2+(1-v)G_1=
\begin{pmatrix}
1-v & 2v & 3-3v & v \\
-3+v & 1+3v & 1 & 2
\end{pmatrix}
$$
is self-dual.

On the other hand, it is an elementary calculation to check that the above code $C$
is permutation-equivalence to a code $\mathcal{C}$ generated by the following matrix:
$$
\begin{pmatrix}
1 & 0 & 2+v & 0\\
0 & 1 & 0 & 2+v
\end{pmatrix}
=
\begin{pmatrix}
I_2 \mid vD_1+(1-v)D_2
\end{pmatrix},
$$
where $D_1=
\begin{pmatrix}
3 & 0\\
0 & 3
\end{pmatrix},
D_2=
\begin{pmatrix}
2 & 0\\
0 & 2
\end{pmatrix}
$.
By the Corollary \ref{conditions-1}, it is easy to check that $\mathcal{C}$ is self-dual.
So the code $C$ is also self-dual.
\end{Example}
\begin{Example}
Consider the construction of self-dual code of length $6$ over $R=F_2+vF_2$.
Here $l_1=l_2=3$ and
$$
G_1=
\begin{pmatrix}
1 & 0 & 1 & 1 & 0 & 1\\
1 & 1 & 1 & 0 & 1 & 0\\
1 & 1 & 1 & 1 & 1 & 1
\end{pmatrix};
$$
$$
G_2=
\begin{pmatrix}
1 & 0 & 0 & 1 & 0 & 0\\
0 & 0 & 1 & 0 & 0 & 1\\
1 & 1 & 1 & 1 & 1 & 1
\end{pmatrix}.
$$
Then the code $C$ of length $6$ over $R=F_2+vF_2$ generated by the following matrix
\begin{eqnarray*}
G & = & v G_2+(1-v)G_1 \\
& = & G_1+v(G_2-G_1) \\
& = &
\begin{pmatrix}
1 & 0 & 1+v & 1 & 0 & 1+v\\
1+v & 1+v & 1 & 0 & 1+v & v\\
1 & 1 & 1 & 1 & 1 & 1
\end{pmatrix}
\end{eqnarray*}
is self-dual.

Similarly, throught an elementary calculation, the above code $C$
is permutation-equivalence to a code $\mathcal{C}$ generated by the following matrix:
$$
\begin{pmatrix}
1 & 0 & 0 &  v & 0 & 1+v\\
0 & 1 & 0 & 1+v & 0 & v \\
0 & 0 & 1 & 0 & 1 & 0
\end{pmatrix}
=
\begin{pmatrix}
I_3 \mid D_1+vD_2
\end{pmatrix},
$$
where $D_1=
\begin{pmatrix}
0 & 0 & 1\\
1 & 0 & 0\\
0 & 1 & 0
\end{pmatrix},
D_2=
\begin{pmatrix}
1 & 0 & 1\\
1 & 0 & 1\\
0 & 0 & 0
\end{pmatrix}
$.
By the Corollary \ref{conditions-1}, it is easy to check that $\mathcal{C}$ is self-dual.
Thus the code $C$ is also self-dual.
\end{Example}
\begin{Example}
Consider the construction of self-dual code of length $12$ over $R=F_3+vF_3$.
Here $l_1=l_2=6$ and
$$G_1=(I_6\mid B),$$
where $I_6$ denotes the $6\times 6$ identity matrix, and
$$
B=
\begin{pmatrix}
0 & 1 & 1 & 1 & 1 & 1\\
1 & 0 & 1 & 2 & 2 & 1\\
1 & 1 & 0 & 1 & 2 & 2\\
1 & 2 & 1 & 0 & 1 & 2\\
1 & 2 & 2 & 1 & 0 & 1\\
1 & 1 & 2 & 2 & 1 & 0
\end{pmatrix},
$$
i.e. the code with generator matrix $G_1$ is the ternary Golay code;
$$
G_2=\left(\begin{array}{llllccccrrrrr}
0 & 1 & 1 & 1 & 0 & 0 & 0 & 0 & 0 & 0 & 0 & 0 \\
1 & 0 & 0 & 0 & 1 & 0 & 1 & 2 & 0 & 1 & 1 & 0 \\
0 & 0 & 0 & 0 & 0 & 1 & 1 & 1 & 0 & 0 & 0 & 0 \\
0 & 0 & 0 & 0 & 1 & 0 & 0 & 0 & 1 & 0 & 2 & 0 \\
0 & 0 & 0 & 0 & 0 & 0 & 0 & 0 & 1 & 2 & 1 & 0 \\
2 & 1 & 2 & 0 & 1 & 2 & 1 & 0 & 2 & 2 & 0 & 1
\end{array}\right).
$$
Then the code $C$ of length $12$ over $R=F_3+vF_3$ generated by the following matrix
\begin{eqnarray*}
G & = & v G_2+(1-v)G_1 \\
& = & G_1+v(G_2-G_1) =  \\
\end{eqnarray*}
$$
\left(\begin{array}{llllccccrrrrr}
1+2v & v & v & v & 0 & 0 & 0 & 1+2v & 1+2v & 1+2v & 1+2v & 1+2v \\
v & 1+2v & 0 & 0 & v & 0 & 1 & 2v & 1+2v & 2+2v & 2+2v & 1+2v \\
0 & 0 & 1+2v & 0 & 0 & v & 1 & 1 & 0 & 1+2v & 2+v & 2+v \\
0 & 0 & 0 & 1+2v & v & 0 & 1+2v & 2+v & 1 & 0 & 1+v & 2+v \\
0 & 0 & 0 & 0 & 1+2v & 0 & 1+2v & 2+v & 2+2v & 1+v & v & 1+2v \\
2v & v & 2v & 0 & v & 1+v & 1 & 1+2v & 2 & 2 & 1+2v & v
\end{array}\right).
$$
is self-dual.

Here we do the same thing as in the above examples and get the code $C$
is permutation-equivalence to a code $\mathcal{C}$ generated by the following matrix:
$$
\left(\begin{array}{llllccccrrrrr}
1 & 0 & 0 & 0 & 0 & 0 & 2+v  & 2+2v & 2    & 1+2v & 0    & 2+v\\
0 & 1 & 0 & 0 & 0 & 0 & 2    & 0    & 1+2v & 2+v  & 1+2v & 2  \\
0 & 0 & 1 & 0 & 0 & 0 & 2v   & 1+2v & 1+2v & 1+2v & 2+v  & 2+2v\\
0 & 0 & 0 & 1 & 0 & 0 & 1+2v & 2+2v & v    & 2+v  & 2+v  & 2+v\\
0 & 0 & 0 & 0 & 1 & 0 & 1+2v & 1+2v & 2+v  & 2v   & 1    & 2+v\\
0 & 0 & 0 & 0 & 0 & 1 & 1+2v & 2+v  & 1+2v & 1    & 1    & 0
\end{array}\right)
=
\begin{pmatrix}
I_6 \mid vD_1+(1-v)D_2
\end{pmatrix},
$$
where
$D_1=
\begin{pmatrix}
0  & 1 & 2    & 0 & 0    & 0\\
2    & 0    & 0 & 0  & 0 & 2  \\
2   & 0 & 0 & 0 & 0  & 1\\
0 & 1 & 1    & 0  & 0  & 0\\
0 & 0 & 0  & 2   & 1    & 0\\
0 & 0  & 0 & 1    & 1    & 0
\end{pmatrix},
D_2=
\begin{pmatrix}
2  & 2 & 2    & 1 & 0    & 2\\
2    & 0    & 1 & 2  & 1 & 2  \\
0   & 1 & 1 & 1 & 2  & 2\\
1 & 2 & 0    & 2  & 2  & 2\\
1 & 1 & 2  & 0   & 1    & 2\\
1 & 2  & 1 & 1    & 1    & 0
\end{pmatrix}
$.
By the Corollary \ref{conditions-1}, it is easy to check that $\mathcal{C}$ is self-dual.
So the code $C$ is also self-dual.
\end{Example}
\noindent{\bf Acknowledgement}
This work is supported by the National Natural Science Foundation of China, Grant No. 11171370.

\medskip

\medskip

\noindent{\bf Appendix}

We give a detail proof for $HG^T=0$ in Theorem \ref{generatormatrix} below.

For $p$ being odd, we have that
\begin{align*}
HG^T & = \begin{pmatrix} v E_1+(1-v)E_2 & P & Q & I_{n-k}\\v(-A_1^T) & 0 & v I_{k_3} & 0\\
(1-v)(-B_1^T) & (1-v)I_{k_2} & 0 & 0 \end{pmatrix}
\begin{pmatrix} I_{k_1} & (1-v)B_1 & v A_1 & v D_1+(1-v)D_2  \\ 0 & v I_{k_2} & 0 & v C_1 \\
0 & 0 & (1-v)I_{k_3} & (1-v)C_2 \end{pmatrix}^T\\
& = \begin{pmatrix} v E_1+(1-v)E_2 & P & Q & I_{n-k}\\v(-A_1^T) & 0 & v I_{k_3} & 0\\
(1-v)(-B_1^T) & (1-v)I_{k_2} & 0 & 0 \end{pmatrix}
\begin{pmatrix}
I_{k_1} & 0 & 0 \\
(1-v)B^T_1 & vI_{k_2} & 0\\
vA^T_1 & 0 & (1-v)I_{k_3}\\
vD_1^T+(1-v)D^T_2 & vC_1^T & (1-v)C_2^T
\end{pmatrix}\\
& = \begin{pmatrix}
v(E_1+QA^T_1+D_1^T)+(1-v)(E_2+PB_1^T+D_2^T) & v(P+C_1^T) & (1-v)(Q+C_2^T)\\
v(-A^T_1)+v^2A^T_1 & 0 & v(1-v)I_{k_3}\\
(1-v)(-B^T_1)+(1-v)^2B^T_1 & v(1-v)I_{k_2} & 0
\end{pmatrix} \\
& = 0,
\end{align*}
where
$$v(E_1+QA^T_1+D_1^T)+(1-v)(E_2+PB_1^T+D_2^T)$$
\begin{align*}
& = v\big[ \begin{pmatrix} -A_2 \mid A_1B_4-A_3\end{pmatrix}^T
+\begin{pmatrix} 0 \mid -B_4 \end{pmatrix}^TA_1^T+D_1^T \big]
+(1-v)\big[ \begin{pmatrix} B_1A_4-B_2 \mid -B_3\end{pmatrix}^T
+\begin{pmatrix}-A_4 \mid 0 \end{pmatrix}^TB_1^T+D_2^T \big]\\
& = v\big[ \begin{pmatrix} -A_2 \mid A_1B_4-A_3\end{pmatrix}
+A_1\begin{pmatrix} 0 \mid -B_4 \end{pmatrix}+D_1 \big]^T
+(1-v)\big[ \begin{pmatrix} B_1A_4-B_2 \mid -B_3\end{pmatrix}
+B_1\begin{pmatrix}-A_4 \mid 0 \end{pmatrix}+D_2 \big]^T\\
& = v\big[ -\begin{pmatrix} A_2 \mid A_3 \end{pmatrix}+D_1\big]^T
+(1-v)\big[ -\begin{pmatrix} B_2 \mid B_3 \end{pmatrix}+D_2\big]^T\\
& = v\big( -D_1+D_1 \big)^T+(1-v)\big( -D_2+D_2\big)^T\\
& = 0;
\end{align*}
$$P+C_1^T=\begin{pmatrix}-A_4 \mid 0\end{pmatrix}^T+\begin{pmatrix}A_4 \mid 0\end{pmatrix}^T=0;$$
$$Q+C_2^T=\begin{pmatrix}0 \mid -B_4\end{pmatrix}^T+\begin{pmatrix}0 \mid B_4 \end{pmatrix}^T=0.$$

For $p=2$,
\small{
\begin{align*}
HG^T & = \begin{pmatrix}
E^TB^T+C_1^TA^T+(D_1+vD_2)^T & C_1^T & E^T & I_{n-k}\\
vB^T & 0 & vI_{k_3} & 0 \\
(1+v)A^T & (1+v)I_{k_2} & 0 & 0
\end{pmatrix}
\begin{pmatrix}
I_{k_1} & A & B & D_1+vD_2\\
0 & vI_{k_2} & 0 & vC_1\\
0 & 0 & (1+v)I_{k_3} & (1+v)E
\end{pmatrix}^T\\
& = \begin{pmatrix}
E^TB^T+C_1^TA^T+(D_1+vD_2)^T & C_1^T & E^T & I_{n-k}\\
vB^T & 0 & vI_{k_3} & 0 \\
(1+v)A^T & (1+v)I_{k_2} & 0 & 0
\end{pmatrix}
\begin{pmatrix}
I_{k_1} & 0 & 0 & \\
A^T & vI_{k_2} & 0\\
B^T & 0 & (1+v)I_{k_3}\\
D_1^T+vD_2^T & vC_1^T & (1+v)E^T
\end{pmatrix}\\
& = \begin{pmatrix}
E^TB^T+C_1^TA^T+(D_1^T+vD_2^T)+C_1^TA^T+E^TB^T+D_1^T+vD^T_2 & vC_1^T+vC_1^T & (1+v)E^T+(1+v)E^T \\
vB^T+vB^T & 0 & v(1+v)I_{k_3} \\
(1+v)A^T+(1+v)A^T & v(1+v)I_{k_2} & 0
\end{pmatrix}\\
& =0.
\end{align*}}
\normalsize
Thus we complete the proof.


\end{document}